    \newtheorem{theorem}{Theorem}[section]
    \newtheorem{lemma}[theorem]{Lemma}
    \newenvironment{proof}[1][Proof]{\begin{trivlist}
    \item[\hskip \labelsep {\bfseries #1}]}{\end{trivlist}}
    \newenvironment{definition}[1][Definition]{\begin{trivlist}
    \item[\hskip \labelsep {\bfseries #1}]}{\end{trivlist}}
    \newcommand{\qed}{\nobreak \ifvmode \relax \else
          \ifdim\lastskip<1.5em \hskip-\lastskip
          \hskip1.5em plus0em minus0.5em \fi \nobreak
          \vrule height0.75em width0.5em depth0.25em\fi}
\begin{document}

\title{Parallel Random Apollonian Networks}

\author{Nicolas Bonnel and Pierre-Francois Marteau and Gildas M\'enier}

\address{VALORIA, Universit\'e de Bretagne Sud, Universit\'e Europ\'eenne de Bretagne, Campus de Tohannic, 56 000 Vannes, France}
\ead{\{bonnel,marteau,menier\} AT univ-ubs.fr}
\begin{abstract}
We present and study in this paper a simple algorithm that produces  so called growing Parallel Random Apollonian Networks (P-RAN) in any dimension $d$. Analytical derivations show that these networks still exhibit small-word and scale-free characteristics. To characterize further the structure of P-RAN, we introduce new parameters that we refer to as the parallel degree and the parallel coefficient, that determine locally and in average the number of vertices inside the (d+1)-cliques composing the network. We provide analytical derivations for the computation of the degree and parallel degree distributions, parallel and clustering coefficients. We give an upper bound for the average path lengths for P-RAN and finally show that our derivations are in very good agreement with our simulations.
\end{abstract}

\maketitle

\section{Introduction}

During the last decade, the study of network topologies has become a useful way to tackle the understanding of information flow within complex natural or artificial systems. The applications range from sociology, logistics, epidemiology, immunology, neural networks characterization, granular packing analysis, networking, etc.  Among a multitude of proposed models, scale-free and small world networks have been widely addressed, essentially because many empirical or real life networks display such properties \cite{citeulike:298144,citeulike:696940}. This is the case for random graphs, social networks, the web and for gene networks for instance. Basically, scale free networks display a power-law degree distribution, $p(k) \sim k^{-\gamma}$, where $k$ is connectivity (degree) and $\gamma$ the degree exponent \cite{barabasi-1999}, while in small world networks, most vertices can be reached from any other by a small number of hops or steps.  Small world networks are characterized by a high clustering coefficient, e.g. a high level of vertices interconnection, and small average path length, namely small minimum path length in average between any pairs of vertices in the network.

\begin{figure}[htbp]
\centering
\includegraphics[scale=0.19]{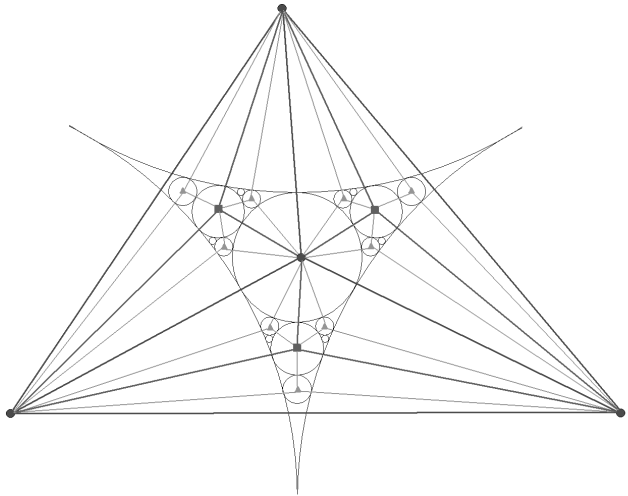}
\caption{2D Apollonian gasket and corresponding network. $1^{st}$ generation: disks, $2^{nd}$ generation: squares, $3^{rd}$ generation: triangles}
\label{apollonianGasketNetwork}
\end{figure}

Among the topologies that display scale free and small world properties, Apollonian networks \cite{PhysRevLett.94.018702} have recently attracted much attention \cite{pellegrini-2007,huang-2006-51}. Apollonian networks are constructed from a fractal generated from a set of hyper-spheres, where any hyper-sphere is tangent to the others. This fractal is also known as the Apollonian gasket, named after Greek mathematician Apollonius of Perga. The 2D Apollonian network, or Deterministic Apollonian Network (DAN) \cite{PhysRevLett.94.018702}, is obtained by connecting the centers of touching spheres (interpreted as vertices) in a three-dimensional Apollonian gasket by edges, as shown in Fig \ref{apollonianGasketNetwork}. The first generation for this fractal network is characterized by disks vertices, the second generation is characterized by square vertices and the third generation is characterized by triangle vertices. Extension to higher dimension have been provided in \cite{HDAN}.

Ramdom Apollonian Networks (RAN) \cite{zhou-2004}, differ from the recursive construction of DANs, as a RAN starts from a (d+1)-clique (a triangle in dimension $2$) containing $d+1$ vertices. Then, at each time step, a single (d+1)-clique is randomly selected from the set of (d+1)-clique in the network that do not already contain a vertex connected to all the vertices composing the (d+1)-clique. The selected (d+1)-clique is then used to insert a new vertex linking to all of the $d+1$ vertices of the selected (d+1)-clique. 2D Random Apollonian Networks (RAN) have been extensively studied in \cite{zhou-2004,zhang-2007-380}, and extension to high dimension RAN (HDRAN) provided in \cite{HDRAN}.

Some recent attempts to make use of RAN like structures in P2P application [refs] faces the requirement to maintain such topologies in dynamic conditions, e.g. when vertices almost freely enter and leave the network. For RAN or HDRAN topologies, the repairing process when vertices leave the network is quite costly and limits the range of potential applications. In order to simplify the topology repairing process (that is beyond the scope of this paper), we are considering an extension of the RAN or HDRAN topologies to what we call Parallel RAN (P-RAN). This new topology that differs slightly from RAN or HDRAN allows to insert several vertices inside a (d+1)-clique, each inserted vertices being fully connected to all the vertices composing the clique. This extension constructs parallel random Apollonian structures that we formally study through out the paper.

After a short presentation of Parallel Deterministic Apollonian Networks (P-DAN) and Parallel Random Apollonian Networks (P-RAN) in the first two sections, we introduce in the third section the parallel degree distribution and parallel coefficient for such networks and study their asymptotic statistical properties for any dimension. The fourth, fifth and sixth  sections give the derivations respectively for the degree distribution and the degree exponent, the clustering coefficient and the average path length for P-RAN. Extensive simulation results are provided through out these sections to validate as far as possible the analytical derivations. A short conclusion ends the paper.

\section{Parallel Deterministic Apollonian Networks}
\label{Parallel Apollonian Networks}
A parallel deterministic Apollonian network in dimension $d$ is constructed recursively from an initial (d+1)-clique allowing to insert at step $t$ more than one vertex into (d+1)-cliques composing the network at step $t-1$. Various rules can be adopted for the construction of Parallel Apollonian networks. Some of them lead to Expanded Apollonian networks \cite{zhang-2006} or recursive clique trees \cite{citeulike:2215346} for which at each time step, a new vertex is inserted in every (d+1)-clique composing the network. In the following subsection, as an example, we propose other rules that lead to a different topology. To characterize the parallel nature of this kind of networks, we introduce what we call the parallel degree $m \ge 0$ of a (d+1)-cliques that characterizes the number of vertices inside the clique and fully connected to the vertices composing the clique. This constructing process is detailed in Algorithm \ref{P-DANAlgo} 

\begin{algorithm}[H]
  \SetLine
  \KwData{$d$: dimension of the P-DAN; $tMax$: maximum number of steps; $m$ the parallel degree}
  \KwResult{$r$ a d-dimensional P-DAN}
  $t \leftarrow 0$\;
  Initialize $r$ to a (d+1)-clique, $c$ ($r$ contains $1$ (d+1)-cliques)\; 
  $C \leftarrow {c}$ the set of (d+1)-cliques composing the P-DAN \;
  
  \While{$t<tMax$}{
    $C' \leftarrow C$\;
    \For{all $c$ in $C$}{
      Insert $m$ new vertices into $c$, fully connected to the vertices composing $c$ \;
      Insert into $C'$ the $m.(d+1)$ new created (d+1)-cliques \;
      }
    $t \leftarrow t+1$ \;
    $C \leftarrow C'$\;
    }
  \caption{P-DAN constructing algorithm}
  \label{P-DANAlgo}
\end{algorithm}

\begin{figure}[htbp]
\centering
\includegraphics[scale=0.7]{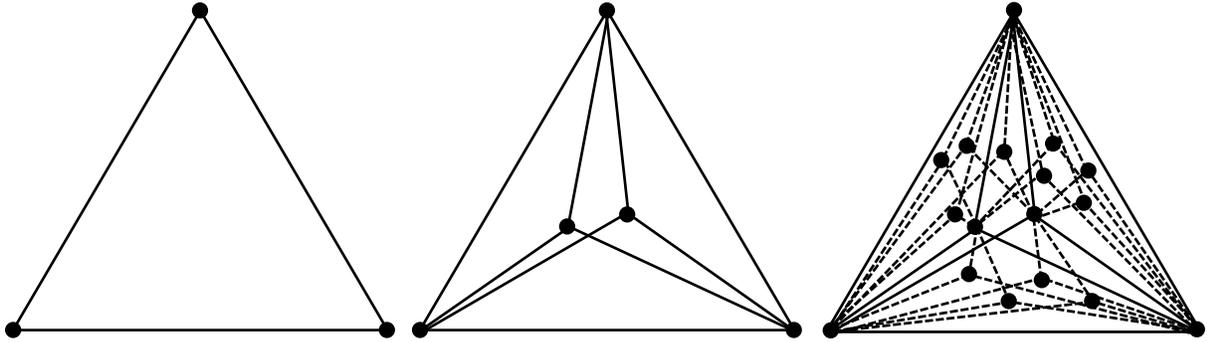}
\caption{2-dimensional P-DAN at t=0 (left), t=1 (middle) and t=2 (right)}
\label{PDANconstruct}
\end{figure}

\subsection{Constructing algorithm}

Following Algorithm \ref{P-DANAlgo} specification, initially, a network containing $d+1$ vertices and a single (d+1)-clique is created.
At each time step $t$, $m \ge 1$ vertices are added into all existing (d+1)-cliques created at time step $t-1$ in the current network and each new vertex is connected to each vertices of the embedding (d+1)-clique, creating $m.(d+1)$ new (d+1)-cliques. Figure \ref{PDANconstruct} presents the first three steps of the P-DAN constructing algorithm.

\section{Parallel Random Apollonian Networks}
We define Parallel Random Apollonian Networks as RAN for which a new vertex can be inserted at time step $t$ in any (d+1)-clique composing the network, whatever its creation time step is. This means that a (d+1)-clique can contain in its inside more than one vertex fully connected to the vertices composing the clique as detailed in Algorithm \ref{P-RANAlgo}.
To our knowledge, no previous work have been reported specifically on P-RAN.  Nevertheless, some similarity can be found for  simple topologies described in one dimension in \cite{Dorogovtsev:cond-mat0011115}. We study in the following sections P-RAN for any dimensions.

\begin{figure}[htbp]
\centering
\includegraphics[scale=0.6]{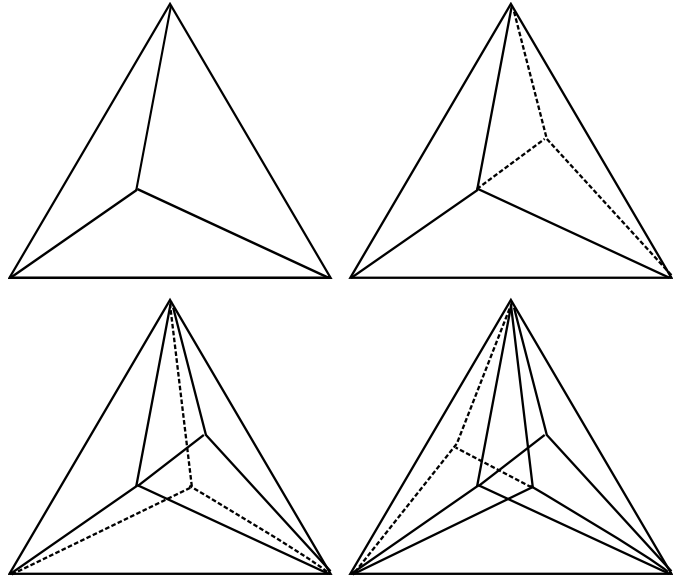}
\caption{2-dimensionnal P-RAN. One vertex is added to a randomly chosen 3-clique at each time steps. Edges added at each time step are dashed}
\label{PRANconstruct}
\end{figure}

\subsection{Constructing algorithm}

\begin{algorithm}[H]
  \SetLine
  \KwData{$d$: dimension of the P-RAN; $tMax$: maximum number of steps}
  \KwResult{$r$ a d-dimensional P-RAN}
  
  $t \leftarrow 0$\;
  Initialize $r$ to a (d+2)-clique ($r$ contains $d+1$ (d+1)-cliques)\; 
  $C \leftarrow {c1, c2, c3}$ the set of (d+1)-cliques composing the initial (d+2)-clique \;
  \While{$t<tMax$}{
    Select randomly a (d+1)-clique, $c$, in $C$ \;
    Insert a new vertex into $c$, fully connected to the vertices composing $c$ \;
    Add to $C$ the $d+1$ new (d+1)-cliques created by the insertion of the new vertex and update $r$ \;
    $t \leftarrow t+1$ \;
    }
  \caption{P-RAN constructing algorithm}
  \label{P-RANAlgo}
\end{algorithm}

Initially, a network containing $d+2$ vertices and  $d+2$ (d+1)-cliques is created.
At each time step, a new vertex is added into a (d+1)-clique selected at random. The new vertex is connected to each vertex of the selected clique, creating $d+1$ new (d+1)-cliques. Thus, comparatively to RAN for which new vertices are inserted into (d+1)-cliques that contain no vertex inside, for P-RAN, any (d+1)-clique can be selected to insert a new vertex, what ever the number of inside vertices is.
Figure \ref{PRANconstruct} shows the first four steps of construction of a P-RAN. A parallel embranchment is created at the third step since a clique containing already a vertex is selected for the insertion of a second inner vertex.

\section{Parallel degree distribution and parallel coefficient}
\label{Parallel degree distribution}

The parallel degree is a characteristic that applies to (d+1)-cliques. We show hereinafter that the discrete parallel distribution for P-RANs follows asymptotically a geometrical law.

\begin{definition}
We define the parallel degree of a (d+1)-clique as the number of vertices ``inside'' the (d+1)-clique, e.g. the number of vertices that are  connected to every vertices of the (d+1)-clique but are not in the set of vertices that compose the (d+1)-clique. 
\end{definition}

\subsection{Estimating the parallel degree distribution}
\begin{lemma}
\label{ParallelDegreeLemma}
Let $t$ be the iteration step of the construction of the growing P-RAN algorithm, and let $m$ be an integer. For large $t$ the parallel degree distribution of a $d$ dimensional P-RAN asymptotically follows the geometric distribution $Pc(m)=\frac{d+1}{(d+2)^{(m+1)}}$.
\end{lemma}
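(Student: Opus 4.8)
The plan is to track, as the construction proceeds, the expected number of $(d+1)$-cliques that have parallel degree exactly $m$, and then to derive a steady-state recurrence relating these quantities. Denote by $N_t$ the total number of $(d+1)$-cliques present after $t$ insertion steps, and by $n_t(m)$ the number of those cliques whose parallel degree equals $m$. Since one insertion into a clique of parallel degree $m$ turns that clique into one of parallel degree $m+1$ and simultaneously creates $d+1$ fresh cliques of parallel degree $0$, we have $N_t = N_0 + (d+1)t$, which grows linearly, so $N_t \sim (d+1)t$ for large $t$.

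Next I would write the expected one-step change in $n_t(m)$. At step $t+1$ a clique is chosen uniformly at random among the $N_t$ cliques, so the probability of hitting a clique of parallel degree $m$ is $n_t(m)/N_t$. Hence, for $m \ge 1$,
\begin{equation}
\mathbb{E}[n_{t+1}(m) - n_t(m)] = \frac{n_t(m-1)}{N_t} - \frac{n_t(m)}{N_t},
\end{equation}
while for $m=0$ one gets an extra $+(d+1)$ from the newly created cliques and no incoming term:
\begin{equation}
\mathbb{E}[n_{t+1}(0) - n_t(0)] = (d+1) - \frac{n_t(0)}{N_t}.
\end{equation}
The standard move now is to posit that the distribution stabilizes, i.e. $n_t(m) \sim Pc(m)\, N_t \sim Pc(m)(d+1)t$ for large $t$. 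Substituting this ansatz into the recurrences, the left-hand side of each equation behaves like $Pc(m)(d+1)$ (the per-step increment of a linear function), and $N_t$ in the denominators cancels the factor $t$, leaving algebraic relations: $Pc(0)(d+1) = (d+1) - Pc(0)$, giving $Pc(0) = (d+1)/(d+2)$, and $Pc(m)(d+1) = Pc(m-1) - Pc(m)$, giving the geometric recursion $Pc(m) = \frac{1}{d+2}\,Pc(m-1)$. Iterating yields $Pc(m) = \frac{d+1}{(d+2)^{m+1}}$, which is exactly the claimed law (and automatically sums to $1$ over $m\ge 0$, a useful consistency check).

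The main obstacle is justifying the self-consistency ansatz rigorously: one needs that $n_t(m)/N_t$ actually converges (in expectation, or almost surely) to a constant $Pc(m)$, rather than merely assuming it. The clean way to handle this is by induction on $m$: solve the linear recurrence for $\mathbb{E}[n_t(0)]$ directly — it is of the form $a_{t+1} = a_t(1 - 1/N_t) + (d+1)$ with $N_t = N_0+(d+1)t$, whose solution is asymptotically linear with slope $(d+1)/(d+2)$ — and then feed the established asymptotics for $\mathbb{E}[n_t(m-1)]$ into the recurrence for $\mathbb{E}[n_t(m)]$ to get its linear growth rate. Each step is an elementary (if slightly tedious) estimate on a first-order linear recurrence with slowly varying coefficients; concentration around the expectation, if desired, follows from a standard Azuma–Hoeffding argument since each insertion changes each $n_t(m)$ by at most $d+2$. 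I would present the induction for the expectations as the core of the proof and relegate the concentration remark to a sentence.
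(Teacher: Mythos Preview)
Your proposal is correct and follows essentially the same rate-equation approach as the paper: write the expected one-step change in the number of $(d+1)$-cliques of each parallel degree, pass to the steady-state ansatz, and solve the resulting geometric recursion. Your treatment is in fact somewhat more careful than the paper's, which glosses over the $m=0$ source term and justifies convergence only by an unsubstantiated ``Cauchy sequence'' remark, whereas you separate the $m=0$ case explicitly and sketch an inductive argument for the limit.
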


\begin{proof}
At time $t=0$, the networks is composed with $d+2$ vertices forming $d+2$ (d+1)-cliques. Each time a new vertex is inserted into the network, the number of (d+1)-cliques increases by $d+1$. If $Nc_t$ is the number of (d+1)-cliques at time $t$, we have $Nc_t = d+2+t.(d+1)$.

Furthermore, each time a (d+1)-clique $c_j$ is selected for the insertion of a new vertex, its parallel degree $m_j$ increases by $1$. Thus, if $Nc_t(m)$ is the number of (d+1)-cliques having a parallel degree equal to $m$ at time $t$ we get the following growth rate for $Nc_t(m)$
 
\begin{equation}
\label{pd1}
 Nc_t(m) = Nc_{t-1}(m)+\frac{Nc_{t-1}(m-1)}{d+2 +(d+1)(t-1)}-\frac{Nc_{t-1}(m)}{d+2 +(d+1)(t-1)}
\end{equation} 

Let $Pc_t(m)$ be the probability to select a (d+1)-clique with parallel degree $m$ at time $t$. $Pc_t(m)$ can be approximated by the ratio $\frac{Nc_t(m)}{d+2+t.(d+1)}$. Thus $Nc_t(m)=Nc_t.Pc_t(m)=(d+2+t.(d+1)).Pc_t(m)$ and we get from Eq.\ref{pd1}

\begin{eqnarray}
\label{pd2}
 \begin{array}{ll}
 (d+2+t.(d+1)).Pc_t(m)=&(d+2+(t-1).(d+1)).Pc_{t-1}(m)\\
 											 &+Pc_{t-1}(m-1)-Pc_{t-1}(m)
 \end{array}
\end{eqnarray} 

Thus
\begin{equation}
\label{pd3}
 Pc_t(m)=\frac{t(d+1)}{d+2+t(d+1)}.Pc_{t-1}(m)+\frac{Pc_{t-1}(m-1)}{d+2+t(d+1)}
\end{equation}

As $Pc_t(m)$ is bounded for all $m$ and $t$, from Eq.\ref{pd3} we get that $Pc_t(m)$ is a Cauchy sequence, which shows that $\lim_{t \to +\infty}Pc_t(m)=Pc(m)$ exists and that for large $t$, $Pc_t(m) \sim Pc_{t-1}(m) \sim Pc(m)$. Rewriting the previous equation for large $t$ we get

\begin{equation}
\label{pd4}
 Pc(m)\sim \frac{Pc(m-1)}{d+2} = \frac{Pc(0)}{(d+2)^m}
\end{equation} 

It is easy to show by induction on $t$ that the probability to select at any time $t$ a (d+1)-clique having a null parallel degree is $Pc(0)=(d+1)/(d+2)$. Thus for large $t$

\begin{equation}
\label{pd5}
 Pc(m) \sim \frac{d+1}{(d+2)^{(m+1)}}
\end{equation}
This ends the proof and shows that the parallel degree for P-RAN scales as a geometrical distribution.
\qed
\end{proof}

Figure \ref{FigParallelDegreeDistribution} gives the parallel degree distribution for P-RANs estimated experimentally for each dimension from the construction of  $10$ networks utterances containing 100000 vertices each. The figure gives also the absolute error and its corresponding standard deviation measured comparatively to the theoretical expectation, showing a good match between simulation and the theoretical model.

\begin{figure*}[htbp]
\centering
\includegraphics[scale=0.6]{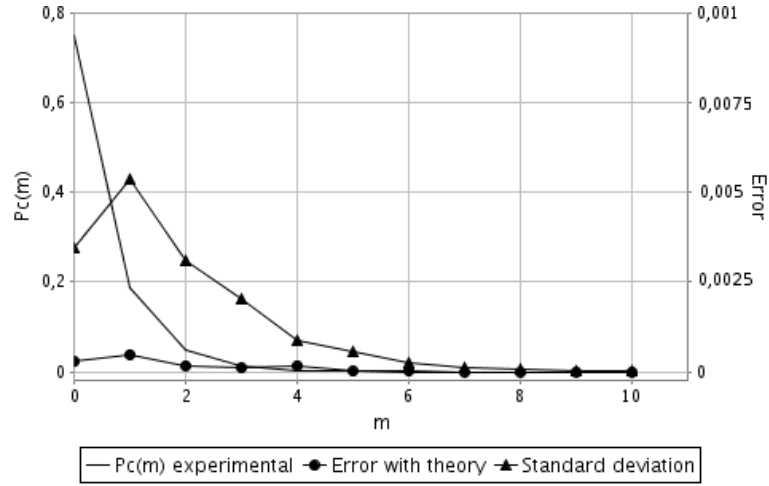}
\caption{Parallel Degree distribution estimated from $10$ 2-dimensional P-RANs containing 100000 vertices each. Error and standard deviation to theory are given on the right vertical axis.}
\label{FigParallelDegreeDistribution}
\end{figure*}

\subsection{Average parallel degree and parallel coefficient}

\begin{definition}
The average parallel degree $M$ of a P-RAN is defined as the mathematical expectation of the parallel degree, i.e. 
\begin{equation}
\label{apd1}
 M = E(Pc(m)) = \sum_{m=1}^{\infty} m.Pc(m) = \sum_{m=1}^{\infty} m.\frac{d+1}{(d+2)^{(m+1)}} = \frac{1}{d+1}
\end{equation} 
\end{definition}

Thus, $M$ measures the average number of vertices inside (d+1)-cliques of a P-RAN.

\begin{definition}
We define the parallel coefficient of a d-dimensional P-RAN $\rho$ as $M - Pc(1)$, i.e. 
\begin{equation}
\label{ParallelCoefficient}
 \rho = \sum_{m=2}^{\infty} m.\frac{d+1}{(d+2)^{(m+1)}}
\end{equation} 
\end{definition}

\begin{lemma}
For d-dimensional P-RAN the average parallel degree is $M=1/(d+1)$, and the parallel coefficient is $\rho= \frac{2.d+3}{(d+1)(d+2)^2}$.
\end{lemma}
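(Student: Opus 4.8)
The plan is to treat both assertions as direct evaluations of series against the geometric law $Pc(m)=\frac{d+1}{(d+2)^{m+1}}$ already established in Lemma~\ref{ParallelDegreeLemma}; no new structural fact about the network is required, so the proof is essentially a computation.

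First I would re-derive $M$. Setting $x=\frac{1}{d+2}$, which satisfies $0<x<1$ for every dimension $d\ge 1$, we have $Pc(m)=(d+1)x^{m+1}$, hence $M=\sum_{m\ge 1} m\,Pc(m)=(d+1)x\sum_{m\ge 1} m x^{m}$. Invoking the standard closed form $\sum_{m\ge 1} m x^{m}=\frac{x}{(1-x)^2}$ (valid by absolute convergence, since $x<1$) gives $M=\frac{(d+1)x^2}{(1-x)^2}=(d+1)\left(\frac{x}{1-x}\right)^2$. Substituting $1-x=\frac{d+1}{d+2}$ so that $\frac{x}{1-x}=\frac{1}{d+1}$ yields $M=\frac{1}{d+1}$, matching the value already recorded in Eq.~(\ref{apd1}).

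Second, I would obtain $\rho$ almost for free from its definition $\rho=M-Pc(1)$. Here $Pc(1)=\frac{d+1}{(d+2)^2}$ is just the geometric law evaluated at $m=1$, so $\rho=\frac{1}{d+1}-\frac{d+1}{(d+2)^2}=\frac{(d+2)^2-(d+1)^2}{(d+1)(d+2)^2}$. Factoring the numerator as a difference of squares, $(d+2)^2-(d+1)^2=\bigl((d+2)+(d+1)\bigr)\bigl((d+2)-(d+1)\bigr)=2d+3$, gives $\rho=\frac{2d+3}{(d+1)(d+2)^2}$, as claimed. (Alternatively one can sum the tail $\sum_{m\ge 2} m\,Pc(m)$ directly, but subtracting the $m=1$ term from the already-computed $M$ is cleaner.)

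There is no genuinely hard step: the only points deserving a word of care are recording the absolute convergence of $\sum_{m\ge 1}mx^m$ so that the closed form applies, and carrying out the difference-of-squares simplification correctly; everything else is routine arithmetic with the distribution from Lemma~\ref{ParallelDegreeLemma}.
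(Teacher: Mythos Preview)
Your proposal is correct and follows essentially the same approach as the paper's own proof: both compute $M$ as the expectation of the geometric law from Lemma~\ref{ParallelDegreeLemma} and then obtain $\rho$ by subtracting $Pc(1)=\frac{d+1}{(d+2)^2}$ from $M$. You simply supply more detail in the evaluation of $M$ (the paper just invokes the known mean of a geometric distribution), and your difference-of-squares simplification of the numerator is exactly what the paper leaves implicit.
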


\begin{proof}
According to Eq.\ref{apd1} The parallel degree distribution follows a geometrical law whose expectation is $M=1/(d+1)$ and variance is $(d+2)/(d+1)^2$. Thus
$\rho= M - Pc(1) =\frac{1}{d+1}-\frac{d+1}{(d+2)^{2}}$ and the result follows.
\qed
\end{proof}

For $d=2$ we get $M=1/3$ for P-RAN, which is also the case for RAN, and $\rho=7/48$ for P-RAN while $\rho=0$ for RAN. 

Figure \ref{FigParallelCoeff} shows the parallel coefficients for P-RANs estimated experimentally for each dimension from the construction of  $10$ networks utterances containing 100000 vertices each. The figure gives also the absolute error and its corresponding standard deviation measured comparatively to the theoretical expectation.

\begin{figure*}[htbp]
\centering
\includegraphics[scale=0.6]{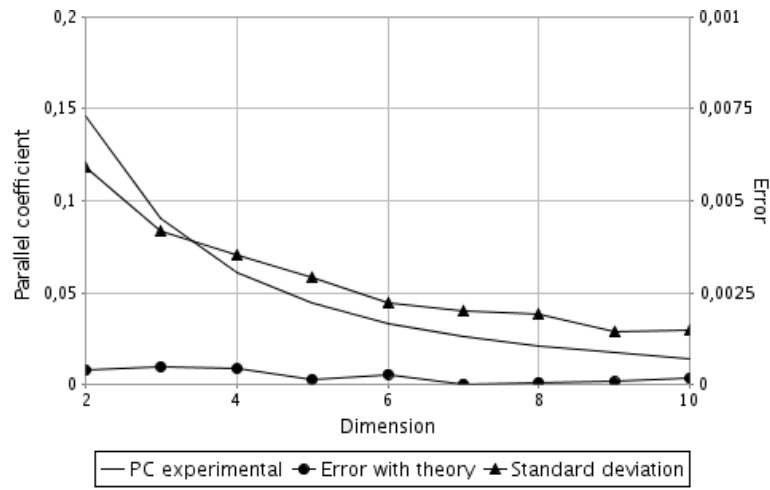}
\caption{Parallel coefficient of a P-RAN as a function of the dimension. Error and standard deviation to theory are given on the right vertical axis.}
\label{FigParallelCoeff}
\end{figure*}

\section{Estimating the degree distribution}

The degree of a vertex in a network is the number of connections it shares with other vertices and the degree distribution is the probability distribution of these degrees over the whole network.

\subsection{Determining the degree distribution}

\begin{lemma}
The degree distribution of a d-dimensional P-RAN is given by the following recursion
\begin{eqnarray}
\label{degreeDistribution}
  \left\{
   \begin{array}{ll}
     P(k) \sim \frac{d.k-d^2-d+1}{d.k-d^2+d+2} \cdot P(k-1) & \mbox{for } k > d+1\\
     P(k) \sim \frac{1}{2}  & \mbox{for } k=d+1
   \end{array}
   \right.
\end{eqnarray}.
\end{lemma}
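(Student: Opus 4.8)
The plan is to mimic the rate‑equation / master‑equation approach already used in the proof of Lemma~\ref{ParallelDegreeLemma}, but now tracking the number of vertices of a given degree rather than the number of cliques of a given parallel degree. First I would set up the bookkeeping: at time $t$ the network has $N_t = d+2+t$ vertices and $Nc_t = d+2+t(d+1)$ $(d+1)$-cliques. A vertex $v$ currently of degree $k$ participates in exactly $k-d+1$ of these $(d+1)$-cliques when it is an ``interior'' vertex inserted into one clique (each insertion into a clique containing $v$ raises $\deg v$ by $1$ and raises the number of $(d+1)$-cliques through $v$ by $d$, starting from the $d+1$ cliques it was born into; solving the recursion gives that a vertex of degree $k$ lies in $d(k-d-1)+(d+1) = dk-d^2-d+1$ cliques, hence is selected at a step with probability $\frac{dk-d^2-d+1}{Nc_{t}}$). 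This linear-in-$k$ selection probability is the key structural fact that will produce the stated coefficients.

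Next I would write the evolution equation for $N_t(k)$, the number of vertices of degree $k$ at time $t$. A vertex of degree $k-1$ becomes a vertex of degree $k$ exactly when one of the $dk-d^2-d+1$ cliques through it (with $k$ replaced by $k-1$, i.e. $d(k-1)-d^2-d+1$) is selected, and a vertex of degree $k$ is lost from this count when one of its $dk-d^2-d+1$ cliques is selected; in addition, at every step exactly one new vertex of degree $d+1$ is created. This yields, for $k>d+1$,
\begin{equation}
\label{degEq}
 N_t(k) = N_{t-1}(k) + \frac{\big(d(k-1)-d^2-d+1\big)N_{t-1}(k-1)}{Nc_{t-1}} - \frac{\big(dk-d^2-d+1\big)N_{t-1}(k)}{Nc_{t-1}},
\end{equation}
with the analogous equation for $k=d+1$ having a $+1$ source term and no gain term from below. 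Then, exactly as before, I substitute the ansatz $N_t(k)=N_t\,P_t(k)=(d+2+t)P_t(k)$, use $Nc_{t-1}=d+2+(t-1)(d+1)\sim (d+1)t$ for large $t$, and pass to the stationary limit $P_t(k)\to P(k)$ (the boundedness argument from Lemma~\ref{ParallelDegreeLemma} giving convergence). The $t$'s cancel against the $(d+1)t$ in the denominator, leaving an algebraic recursion
\[
 (d+1)P(k) = -\big(dk-d^2-d+1\big)P(k) + \big(dk-d^2-2d+1\big)P(k-1),
\]
which rearranges to $P(k)\sim \frac{dk-d^2-2d+1}{dk-d^2+2}\,P(k-1)$; a relabeling $k\mapsto k$ on the numerator (writing $d(k-1)-d^2-d+1 = dk-d^2-2d+1$) reconciles this with the claimed $\frac{dk-d^2-d+1}{dk-d^2+d+2}P(k-1)$ once one is careful about whether the transfer coefficient is evaluated at $k-1$ or at $k$ — this indexing is exactly the point to get right. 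For the base case $k=d+1$, the stationary equation reads $(d+1)P(d+1) = 1 - (d+1)P(d+1)$ (since a newly created vertex has degree $d+1$ and lies in $d+1$ cliques, so $dk-d^2-d+1 = d+1$ there), giving $P(d+1)=\tfrac12$.

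The main obstacle I anticipate is not the limiting argument — that is a verbatim repeat of the Cauchy-sequence reasoning in Lemma~\ref{ParallelDegreeLemma} — but the combinatorial claim that a degree‑$k$ vertex sits in precisely $dk-d^2-d+1$ cliques, and the consistent indexing of the gain term. One must verify the base of that induction ($k=d+1$: a fresh vertex is in $d+1$ cliques, and indeed $d(d+1)-d^2-d+1=1$... so actually the formula must be checked carefully, since $d(d+1)-d^2-d+1 = 1 \ne d+1$) — this discrepancy signals that the correct count is clique-membership measured differently, or that the transfer coefficient in \eqref{degreeDistribution} is the clique-selection probability at the \emph{previous} degree; resolving this bookkeeping is where the real care is needed, and I would pin it down by explicitly tracking a single vertex through the first few insertions in low dimension ($d=2$) before trusting the general recursion. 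Everything after that is the same mechanical passage to the continuum/stationary limit already demonstrated in the paper.
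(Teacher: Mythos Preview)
Your approach is exactly the paper's: set up a rate equation for $N_t(k)$ using the number of $(d{+}1)$-cliques through a degree-$k$ vertex, substitute $N_t(k)=(d{+}2{+}t)P_t(k)$, and pass to the stationary limit. The only defect is the arithmetic slip you yourself flagged at the end: $d(k-d-1)+(d+1)=dk-d^2+1$, \emph{not} $dk-d^2-d+1$. With the correct count, a degree-$(d{+}1)$ vertex lies in $d(d{+}1)-d^2+1=d+1$ cliques (as it should), the loss coefficient at degree $k$ is $dk-d^2+1$, the gain coefficient from degree $k-1$ is $d(k{-}1)-d^2+1=dk-d^2-d+1$, and the stationary balance $(d{+}1)P(k)=(dk-d^2-d+1)P(k-1)-(dk-d^2+1)P(k)$ rearranges directly to the claimed ratio $\frac{dk-d^2-d+1}{dk-d^2+d+2}$. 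So there is no indexing ambiguity to resolve; once the sign error is fixed, the bookkeeping is clean.

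Your treatment of the base case is in fact simpler than the paper's. The paper derives $P(d{+}1)=\tfrac12$ via a separate argument that tracks how many $(d{+}1)$-cliques of parallel degree zero have all their vertices of degree ${>}\,d{+}1$ versus exactly one of degree $d{+}1$, and then combines this with the parallel-degree result $Pc(0)=\frac{d+1}{d+2}$. You instead just write the master equation at $k=d{+}1$ with the $+1$ source term and loss coefficient $d{+}1$, obtaining $P(d{+}1)=1-P(d{+}1)$ directly. Both routes are valid; yours is the more economical one and avoids the auxiliary case analysis entirely.
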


\begin{proof}
We note that, once a new vertex is added into the P-RAN network, the number of (d+1)-cliques available for the insertion of a new vertex is increased by $d+1$. After $t$ iterations, the number of (d+1)-cliques  available for the insertion of a new vertex is $d+2+t(d+1)$. 

Thus, given a vertex $v_i$, when its degree increases by $1$ the number of (d+1)-cliques that contain vertex $v_i$ increases by $d$. So the number of (d+1)-cliques available for selection containing vertex $v_i$  with degree $k_i$ is $(k_i - (d+1)).d+d+1 = d.k_i-d^2+1$, since at $t=t_i$, the creation time of vertex $v_i$ there is $d+1$ (d+1)-cliques that contain vertex $v_i$.

Let $N_t$ be the total number of vertices into the P-RAN at step $t$ ($N_t = d+2+t$) and let $N_t(k)$ be the number of vertices having a degree $k$ at time $t$. We can write the following difference equation

\begin{eqnarray}
\label{eqdd1}
 \begin{array}{ll}
 N_t(k) = N_{t-1}(k) &+ \frac{d.(k-1)-d^2+1}{d+2+(t-1).(d+1)}N_{t-1}(k-1)\\
                  &- \frac{d.k-d^2+1}{d+2+(t-1).(d+1)}N_{t-1}(k)
 \end{array}
\end{eqnarray} 
 
Let $P_t(k)$ be the probability to select a vertex with  degree $k$ at time $t$. $P_t(k)$ can be approximated by the ratio $\frac{N_t(k)}{d+2+t}$. Hence $Nc_t(k)=(d+2+t).P_t(k)$ and we get from Eq.\ref{eqdd1}

\begin{eqnarray}
\label{eqdd2}
 \begin{array}{ll}
P_t(k).(d+2+t) = &P_{t-1}(k).(d+2+(t-1)) \\ 
& + \frac{d(k-1)-d^2+1}{d+2+(t-1)(d+1)}.P_{t-1}(k-1).(d+2+(t-1)) \\
& - \frac{dk-d^2+1}{d+2+(t-1)(d+1)}.P_{t-1}(k).(d+2+(t-1)) 
 \end{array}
\end{eqnarray} 

As $P_t(k)$ is bounded for all $k$ and $t$ from Eq.\ref{eqdd2} we get that $P_t(k)$ is a Cauchy sequence, showing that $\lim_{t \to +\infty} P_t(k)=P(k)$ exists, and that for large $t$, $P_t(k) \sim P_{t-1}(k) \sim P(k)$. Rewriting the previous equation for large $t$ we get

\begin{eqnarray}
\label{eqdd3}
 \begin{array}{ll}
   P(k).\left(1+ \frac{d.k-d^2+1}{d+1}\right) \sim \frac{d.k-d^2-d+1}{d+1}.P(k-1)
 \end{array}
\end{eqnarray}

and finally 

\begin{equation}
\label{Eq.dd6}
 P(k) \sim \frac{d.k-d^2-d+1}{d.k-d^2+d+2} \cdot P(k-1)
\end{equation}

This recursive equation is defined for $k\ge d+1$. We show next that $P(d+1)=1/2$ for all dimensions.
 
\begin{itemize}
 \item Let $N_{d+1,t}$ be the expected number of vertices into the network having a degree equal to $d+1$ at time $t$, 
 \item let $n_t$ be the expected total number of (d+1)-cliques having a parallel degree equal to $0$, 
 \item let $n'_t$ be the expected total number of (d+1)-cliques having a parallel degree equal to $0$ for which all vertices have a degree $k>(d+1)$ at time $t$,
 \item let $n''_t$ be the expected total number of (d+1)-cliques having a parallel degree equal to $0$ for which all vertices have a degree $k>d+1$ except one vertex that has a degree $k=d+1$ at time $t$.
\end{itemize}

For sufficiently large $t$, every vertex $v_i$ in the network has a degree $k_i \ge d+1$, and every (d+1)-clique $v_j$ has either all its vertices with a degree $k>d+1$ or only one vertex with a degree $k=3$. Thus, when we insert a new vertex in a (d+1)-clique $c_j$, only three cases arise for the (d+1)-clique selected for the insertion:
\begin{enumerate}
 \item If the clique has a parallel degree $m>0$ then $N_{d+1}(t)$ is increased by one, $n'_t$ is unchanged and $n''_t$ is increased by $d+1$
 \item If the clique has a parallel degree $m=0$ and all its $d+1$ vertices has a degree $k>d+1$, in this case the $N_{d+1}(t)$ is increased by one, $n'_t$ is decreased by one and $n''_t$ is increased by $d+1$ 
  \item If the clique has a parallel degree $m=0$ and all its $d+1$ vertices have a degree $k>d+1$ except one with a degree equal to $d+1$, $N_{d+1}(t)$ is unchanged, $n'_t$ is increased by $d$ and $n''_t$ is unchanged. 
\end{enumerate}

In Section~\ref{Parallel degree distribution} we have shown that the probability to select randomly a (d+1)-clique with a parallel degree $m=0$ is $P(m=0)=(d+1)/(d+2)$ and $n_t \sim t.\frac{(d+1)^2}{d+2}$. Previous statements lead to the following equations

\begin{equation}
\label{dd14}
 P(d+1) \sim \frac{1}{d+2} + \frac{d+1}{d+2}.\frac{n'_t}{n_t}
\end{equation}

\begin{eqnarray}
\label{Eq.dd15}
   \begin{array}{ll}
   n'_t &= n'_{t-1}+ \left(d.\frac{n''_{t-1}}{n_{t-1}} - \frac{n'_{t-1}}{n_{t-1}}\right).\frac{d+1}{d+2} \\
           &= n'_{t-1}+ \left(d.\frac{n_{t-1}- n'_{t-1}}{n_{t-1}} - \frac{n'_{t-1}}{n_{t-1}}\right).\frac{d+1}{d+2} \\
           &= n'_{t-1}\left(1 -\frac{d+1}{n_{t-1}}.\frac{d+1}{d+2}\right) + d.\frac{d+1}{d+2} \\
   \end{array}
\end{eqnarray}

Assuming that $\lim_{t \to +\infty} n'_t/n_t$ exists (this is obviously the case since $P(k=d+1)$ exists), $n'_t \sim a.t$ where $a$ is a constant. Replacing $n'_t$ in Eq.\ref{dd14} we get

\begin{equation}
\label{Eq.dd16}
 a.t = a.(t-1)\left(1-\frac{1}{t-1}\right)+ d.\frac{d+1}{d+2}
\end{equation}

leading to $a = \frac{d}{2}.(\frac{d+1}{d+2})$. Thus,

\begin{equation}
\label{Eq.dd17}
   \frac{n'_t}{n_t} \sim \frac{a}{\frac{d+1}{d+2}.(d+1)} = \frac{d}{2.(d+1)}
\end{equation}

Finally, $P(d+1) = P(k=d+1) \sim \frac{1}{d+2} + \frac{d+1}{d+2}.\frac{n'_t}{n_t} =  \frac{d+1}{d+2} + \frac{d}{2.(d+1)} = 1/2$. Note that $P(d+1)$ is independent from the dimension $d$. 

This completes the recursive equation that gives the degree spectrum distribution.
\qed
\end{proof}

To our knowledge, there is no simple analytical expression for $P(k)$ in any dimension. Nevertheless, for $d=1$, we get 

\begin{equation}
\label{Eq.dd18}
   P(k) \sim \frac{12}{(k+2).(k+1).k}
\end{equation}

This result in dimension one has already been reported in \cite{Dorogovtsev:cond-mat0011115}.

\subsection{Degree exponent}
For scale free networks, the degree distribution follows asymptotically a power law whose exponent is called the degree exponent. In the following, we show that P-RANs are scale free networks and derive their degree exponents.

\begin{lemma}
The degree exponent of a d-dimensional P-RAN is $\gamma=\frac{2.d+1}{d}$
\end{lemma}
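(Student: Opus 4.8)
The plan is to read off the exponent from the large-$k$ behaviour of the recursion established in the preceding lemma, $P(k)\sim\frac{dk-d^2-d+1}{dk-d^2+d+2}\,P(k-1)$ for $k>d+1$ with $P(d+1)\sim 1/2$, by comparing it with the power-law ansatz $P(k)=C\,k^{-\gamma}$ that characterises scale-free networks.

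First I would expand the multiplicative factor for large $k$. Writing $dk-d^2-d+1 = dk\bigl(1-\frac{d^2+d-1}{dk}\bigr)$ and $dk-d^2+d+2 = dk\bigl(1-\frac{d^2-d-2}{dk}\bigr)$ and taking the quotient gives $\frac{P(k)}{P(k-1)} = 1-\frac{2d+1}{dk}+O(k^{-2})$. On the other hand, the ansatz $P(k)=C\,k^{-\gamma}$ yields $\frac{P(k)}{P(k-1)} = (1-1/k)^{\gamma} = 1-\frac{\gamma}{k}+O(k^{-2})$. Matching the coefficient of $1/k$ forces $\gamma = \frac{2d+1}{d}$, which is the claimed exponent.

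To make this rigorous rather than heuristic, I would instead solve the recursion in closed form: the index shift $l = k-(d+1)$ turns the factor into $\frac{l+1/d}{l+2+2/d}$, so telescoping from $l=1$ to $l=k-d-1$ gives $P(k) = \frac12\prod_{l=1}^{k-d-1}\frac{l+1/d}{l+2+2/d}$, which is a ratio of Gamma functions, $P(k) = \frac12\,\frac{\Gamma(3+2/d)}{\Gamma(1+1/d)}\cdot\frac{\Gamma(k-d+1/d)}{\Gamma(k-d+2+2/d)}$. Applying the standard estimate $\Gamma(k+\alpha)/\Gamma(k+\beta)\sim k^{\alpha-\beta}$ as $k\to\infty$ then gives $P(k)\sim C\,k^{-(2+1/d)} = C\,k^{-\frac{2d+1}{d}}$ for a constant $C$ depending only on $d$, so the degree distribution is asymptotically a power law with exponent $\gamma=\frac{2d+1}{d}$. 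As a consistency check, $d=1$ recovers $\gamma=3$, in agreement with the explicit form $P(k)\sim 12/((k+2)(k+1)k)$ quoted above, and $d=2$ gives $\gamma=5/2$.

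The main obstacle is not the algebra but the legitimacy of passing from the \emph{asymptotic} recursion (``$\sim$'') to a statement about $P(k)$ for all large $k$: strictly one should check that the $t\to\infty$ corrections absorbed into the ``$\sim$'' accumulate, over the $k-d-1$ factors of the product, to at most a bounded multiplicative constant, so that they affect the prefactor $C$ but not the exponent. Controlling this, together with the uniformity in $k$ of the $O(k^{-2})$ error terms when they are summed in $\log P(k)$, is the only delicate point; everything else is a routine Gamma-function computation, which I would relegate to a remark for readers wanting the exact prefactor.
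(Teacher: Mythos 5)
Your argument is correct, and it is stronger than the paper's. The first half of your proposal (expanding $P(k)/P(k-1)=1-\frac{2d+1}{dk}+O(k^{-2})$ and matching against $(1-1/k)^{\gamma}$) is essentially the computation the paper performs: the paper evaluates the log-ratio $R(k)=\frac{\log(P(k)/P(k-1))}{\log(k/(k-1))}$ and shows $R(k)\to-\frac{2d+1}{d}$, which is the same first-order matching in different clothing. Your second half, however, goes genuinely further: by the shift $l=k-(d+1)$ you telescope the recursion into the exact closed form $P(k)=\frac12\,\frac{\Gamma(3+2/d)}{\Gamma(1+1/d)}\,\frac{\Gamma(k-d+1/d)}{\Gamma(k-d+2+2/d)}$, and the standard Gamma-ratio asymptotics then give $P(k)\sim C\,k^{-(2+1/d)}$ with an explicit prefactor. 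This buys two things the paper's argument does not: (i) it actually establishes the power-law asymptotics $P(k)\sim C k^{-\gamma}$, whereas the log-ratio limit alone only pins down the local slope (it would not distinguish $k^{-\gamma}$ from, say, $k^{-\gamma}\log k$); and (ii) it recovers as a special case the $d=1$ formula $P(k)\sim\frac{12}{k(k+1)(k+2)}$, which the paper states separately without derivation. Your closing caveat about the legitimacy of treating the asymptotic-in-$t$ recursion as exact in $k$ is well taken, but note that this issue is inherited from the preceding lemma and is equally present (and equally unaddressed) in the paper's own argument, so it is not a gap specific to your route.
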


\begin{proof}
To show that the degree distribution follows a power law, we evaluate the asymptotic value of the following ratio
\begin{equation}
\label{de1}
   R(k)=\frac{log(P(k))-log(P(k-1))}{log(k)-log(k-1)}=\frac{log(P(k)/P(k-1))}{log(k/(k-1))}
\end{equation}

Thus

\begin{eqnarray}
\label{de2}
   \begin{array}{ll}
   R(k) =  \frac{log(\frac{d.k-d^2-d+1}{d.k-d^2+d+2})}{log(k/(k-1))} &= \frac{log(\frac{d.k-d^2-d+1}{d.k-d^2+d+2})}{log(k/(k-1))}\\
            &= \frac{log(\frac{1+\frac{-d^2-d+1}{d.k}}{1+\frac{-d^2+d+2}{d.k}})}{-log(1-\frac{1}{k})}
   \end{array}
\end{eqnarray}

and for large $k$

\begin{eqnarray}
\label{de3}
   \begin{array}{ll}
   R(k) & \sim k.\left(\frac{-d^2-d+1}{d.k} - \frac{-d^2+d+2}{d.k}\right) \\
        & \sim -\frac{2.d+1}{d}
   \end{array}
\end{eqnarray}

This shows that for large $k$ $P(k) \sim k^{-\gamma}$ with $\gamma=\frac{2.d+1}{d}$.
\qed
\end{proof}

For $d=2$, we theoretically get  $\gamma = 5/2$.

\begin{figure*}[htbp]
\centering
\includegraphics[scale=0.6]{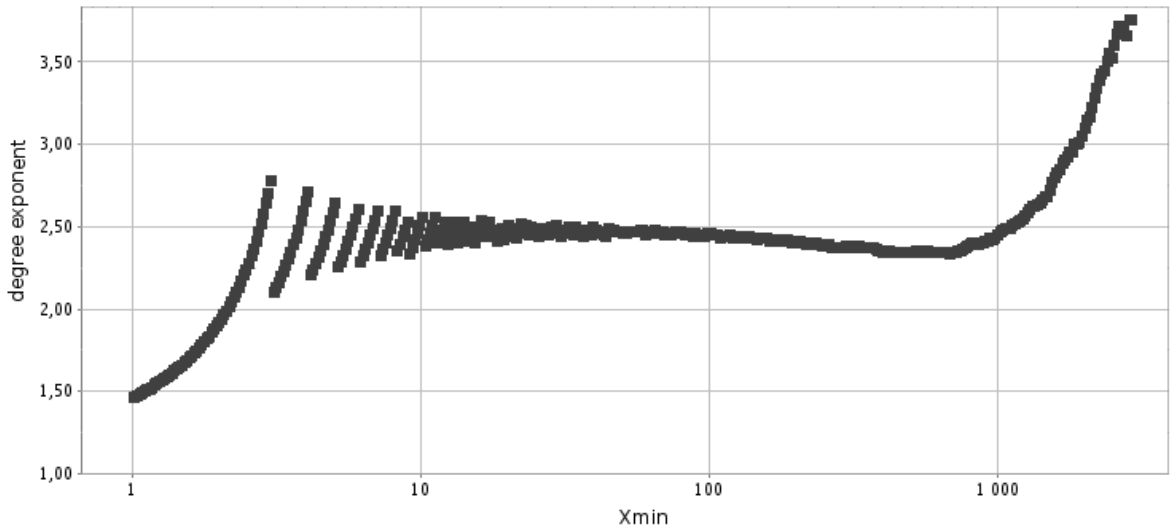}
\caption{Degree exponent estimation for a 2-dimensional P-RAN containing 200000 vertices according to Eq.\ref{degreeExponentEstimate}}
\label{FigDegreeExponent}
\end{figure*}

We evaluate the empirical degree exponent using the mean of the maximum likelihood estimate computed according to the following formula proposed in \cite{clauset-2007} :
\begin{equation}
\gamma\approx 1+n\left(\sum_{i=1}^{n} log\left(\frac{k_i}{k_{min}-\frac{1}{2}}\right)\right)^{-1}
\label{degreeExponentEstimate}
\end{equation}
where $k_i$, $i = 1,2,...,n$ are the observed values of k such that $k_i \ge k_{min}$.
Figure \ref{FigDegreeExponent} gives the estimated degree exponent according to $k_{min}$ for networks containing $500000$ nodes. Results are well correlated with theory for $k_{min} \in [30;300]$. When $x_{min}$ is lower than $30$, a bias is introduced in the power law by low degree vertices while when $k_{min}$ is higher than $300$, the set of vertices having a high degree becomes too small to give an accurate estimate. 

\section{Clustering coefficients}
The clustering coefficient $C_i$ that characterizes vertex $v_i$ is the proportion of links between the vertices within its neighborhood ($v_i$ excluded) divided by the number of links that could possibly exist between them. For undirected graph, considering two vertices $v_i$ and $v_j$, the edges $v_i \rightarrow v_j$ and $v_j \rightarrow v_i$ are considered identical. Therefore, if a vertex $v_i$ has $k_i$ neighbors, $\frac{k_i(k_i-1)}{2}$ edges could exist among the vertices within its neighborhood. The clustering coefficient for the whole network is the average of the clustering coefficients $C_i$ over the set of vertices composing the network, i.e. this is the expectation of the clustering coefficient distribution.

When a vertex is inserted into the network, it is connected to all the vertices of a selected (d+1)-clique. It follows that every vertex having a degree $k_i=d+1$ has a clustering coefficient of one. Furthermore, when a vertex $v_i$ having a degree $k_i$  belongs to a (d+1)-clique in which a new vertex is inserted, its degree increases by one and the new inserted neighbor connects to $d$ vertices among the $k_i$ vertices that compose its neighborhood previously to the insertion. This leads to the following clustering coefficient for a vertex having a degree $k$

\begin{equation}
\label{Eq.cc1}
   C(k)=\frac{\frac{d.(d+1)}{2}+d.(k-d-1)}{\frac{k.(k-1)}{2}} = \frac{d.(2k-d-1)}{k.(k-1)} 
\end{equation}

This local clustering coefficient is exactly the same as the one obtained for vertices in RAN \cite{zhang-2006a}. Eq.\ref{Eq.cc1} shows that the local clustering coefficient scales as $C(k) \sim k^{-1}$

We average these coefficients using the discrete degree distribution (Eq.\ref{degreeDistribution}) as follows

\begin{equation}
\label{Eq.cc2}
   C=\sum_{k_i=d+1}^{\infty} \left(\frac{d.(2k_i-d-1)}{k_i.(k_i-1)}.P(k_i) \right)
\end{equation}

For $d=2$, we get $C=0.813$. Figure \ref{FigClusteringCoeff} shows that the clustering coefficient increases from $0.813$ for $d=2$ to $1$ as $d$ tends towards infinity. Comparatively, HDRANs have a significantly lower clustering coefficient at low dimension, e.g. for $d=2$, a RAN has a clustering coefficient $C=.768$. 

Figure \ref{FigClusteringCoeff} gives the clustering coefficients for P-RANs estimated experimentally for each dimension from the construction of  $10$ networks utterances containing 100000 vertices each. The figure gives also the absolute error and its corresponding standard deviation measured comparatively to the theoretical expectation, showing a good match between simulation and the theoretical model.

\begin{figure*}[htbp]
\centering
\includegraphics[scale=0.6]{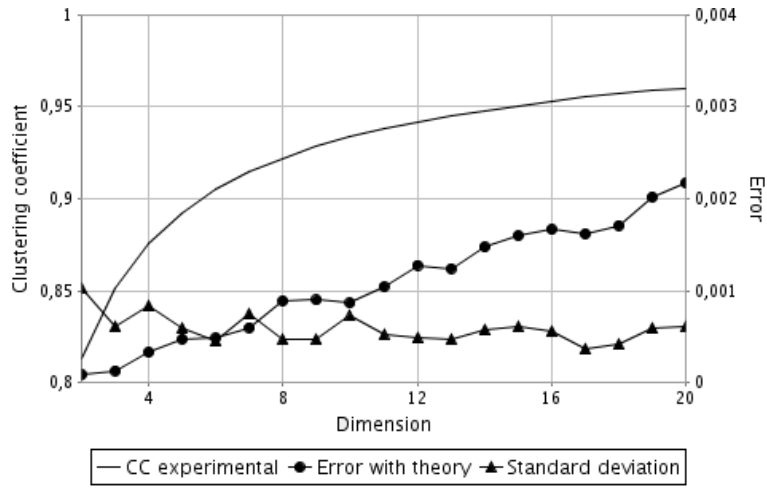}
\caption{Clustering coefficient of a P-RAN as a function of the dimension. Error and standard deviation to theory are given on the right vertical axis.}
\label{FigClusteringCoeff}
\end{figure*}

\section{Average path length}

The average path length (APL) is a characteristic of the network topology that is defined as the average number of edges along the shortest paths for all possible pairs of network vertices. Following exactly the derivations already presented in \cite{zhou-2004,zhang-2007-380} for RAN, we address below the average path length for P-RAN.

First, we suppose that any vertex of the P-RAN network is ordered according to its insertion time stamp $t$ that we consider discrete ($t\in \mathbb{N})$. It is straightforward to establish that for P-RAN the following property holds (as well as for DAN or RAN)

For any two arbitrary vertices $i$ and $j$ all shortest paths from $i$ to $j$ does not pass through a vertex $k$ if $k>max(i,j)$. 

Let $d(i,j)$ denotes the distance between vertices $i$ and $j$, namely the length of a shortest path between vertices $i$ and $j$. Let $\sigma(N)$ be the sum of all distances between all the pairs of vertices into the network with order $N$, e.g. containing $N$ vertices.

\begin{equation}
\label{Eq.sigma}
   \sigma(N)=\sum_{1 \leq i < j \leq N} d(i,j)
\end{equation}

and let $L(N)$ be the average path length of the P-RAN of order $N$ 

\begin{equation}
\label{Eq.APL}
   L(N)= \frac{2\sigma(N)}{N.(N-1)}
\end{equation}

Following exactly the approach given in \cite{zhang-2006a} we get the following recursive inequality for $\sigma(N)$

\begin{equation}
\label{Eq.sigmaR}
   \sigma(N+1) < \sigma(N) + N + \frac{2\sigma(N)}{N}
\end{equation}

Considering the inequality Eq.\ref{Eq.sigmaR} as an equation we get the same upper bound for the variation of $\sigma(N)$ than for RAN

\begin{equation}
\label{Eq.sigmaV}
   \frac{d\sigma(N)}{dN} = N + \frac{2\sigma(N)}{N} 
\end{equation}

which leads to

\begin{equation}
\label{Eq.sigmaV2}
   \sigma(N) \leq N^2.log(N) + S 
\end{equation}

where S is a constant. As $\sigma(N)$ is asymptotically upper bounded by  $\sim N^2.log(N)$, $L(N)$ is asymptotically upper bounded by $log(N)$, e.g. $L(N)$ increases at most as $log(N)$ with $N$.

\begin{figure*}[htbp]
\centering
\includegraphics[scale=0.6]{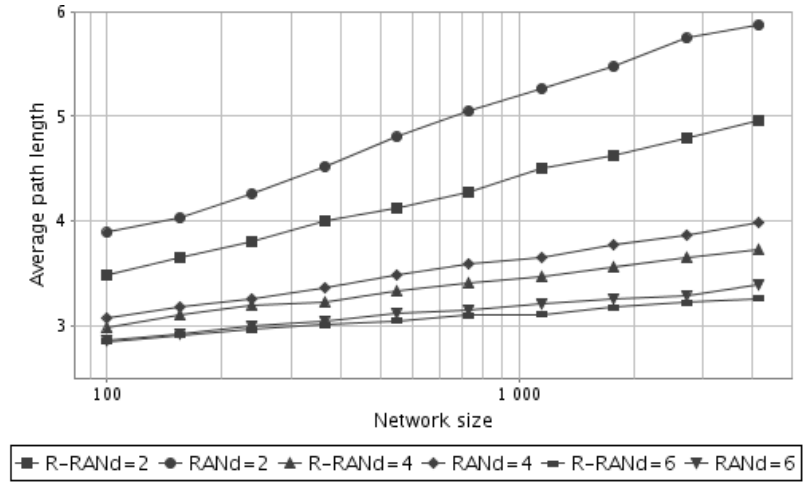}
\caption{Average path length in RANs and P-RANs}
\label{AvgPathLength}
\end{figure*}

Figure \ref{AvgPathLength} compares for dimensions $2$, $4$ and $6$ average path lengths for HDRANs and P-RANs and shows that, for a given dimension, average path lengths are shorter for P-RANs than for HDRANs. Nevertheless, as the dimension increases, differences between path lengths vanish. This result was expected since P-RANs have a higher clustering coefficient than RANs.

\section{Conclusion}

From previous works on Apollonian Networks, mainly RAN and HDRAN networks, we have introduced what we call Parallel Deterministic or Parallel Random Apollonian Networks. These topologies, for which (d+1)-cliques may accept in their inside more than one vertex fully connected to the vertices composing the clique, are still small world and scale free. This paper reports the main statistical properties of P-RANs. For such networks, the degree exponent is in between $2$ 
($2$ being attained at the limit when the dimension tends towards infinity)  and $2.5$ (when the dimension of the network is $2$) or $3$ if we accept the limit case of Apollonian networks in dimension one. We have shown analytically that, comparatively to RAN or HDRAN, P-RAN networks are characterized with higher clustering coefficients and shorter average path lengths. P-RAN are also characterized by their parallel degree distribution and parallel coefficient that quantify the number of vertices inside the (d+1)-cliques that compose P-RAN networks. The  simulations results provided through out the paper are in very good conformance with the analytical expectations. 

\section{References}
\bibliography{apollonian}
\bibliographystyle{plain}

\end{document}